\documentclass[a4paper]{article}
\usepackage{RR}
\usepackage{hyperref}
\usepackage{a4wide}

\usepackage{graphicx}
\usepackage{amssymb}
\usepackage{epsfig}
\usepackage{tabularx}
\usepackage{latexsym}

\newtheorem{lemma}{Lemma}[section]

\newtheorem{corollary}{Corollary}[section]
\newtheorem{theorem}{Theorem}[section]

\newtheorem{defn}[theorem]{Definition}

\def\R{\mathbb{R}}
\def\P{\mathbb{P}}
\def\X{\mathbb{X}}

\RRdate{Septembre 2007}

\RRauthor{
Mridul Aanjaneya\thanks{Department of Computer Science and 
Engineering, IIT Kharagpur, 721302, India [Email: mridul@cse.iitkgp.ernet.in] 
This problem was conceptualized and solved while the first author was invited 
under the INRIA \textit{Internship Program} for a summer visit.}
\and
Monique Teillaud\thanks{INRIA Sophia Antipolis, BP 93, 06902 Sophia Antipolis cedex (France) 
[Email: Monique.Teillaud@sophia.inria.fr - http://www-sop.inria.fr/geometrica/team/Monique.Teillaud/]}
}
\authorhead{M. Aanjaneya \& M. Teillaud}
\RRetitle{Triangulating the Real Projective Plane}
\RRtitle{Trianguler le plan projectif r\'eel}
\titlehead{Triangulating the Real Projective Plane}

\RRabstract{
We consider the problem of computing a triangulation of the real
projective plane $\mathbb{P}^2$, given a finite point set $\mathcal{P}
= \{p_1, p_2,\ldots, p_n\}$ as input. We prove that a triangulation of
$\mathbb{P}^2$ always exists if at least six points in $\mathcal{P}$
are in {\it general position}, i.e., no three of them are collinear.
We also design an algorithm for triangulating $\mathbb{P}^2$ if this
necessary condition holds. As far as we know, this is the first
computational result on the real projective plane.
}
\RRresume{Nous consid\'erons le calcul de la triangulation d'un
ensemble fini de points $\mathcal{P} = \{p_1, p_2,\ldots, p_n\}$ dans
le plan projectif. Nous d\'emontrons que la triangulation existe
toujours d\`es lors qu'au moins six points de $\mathcal{P}$ sont en
position g\'en\'erale, c'est-\`a-dire que trois d'entre eux ne sont
jamais align\'es. Nous proposons \'egalement un algorithme pour
trianguler $\mathbb{P}^2$ si cette condition n\'ecessaire est
remplie. \`A notre connaissance, c'est le premier r\'esultat
algorithmique connu pour le plan projectif r\'eel.}

\RRkeyword{Computational geometry, triangulation, simplicial complex, projective geometry, algorithm}
\RRmotcle{G\'eom\'etrie algorithmique, triangulation, complexe
simplicial, g\'eom\'etrie projective, algorithme}

\RRprojets{Geometrica}
\RRtheme{\THSym} 
\URSophia 
\RRNo{6296}

\begin{document}
\makeRR

\section{Introduction\label{intro}}

The real projective plane $\mathbb{P}^2$ is in one-to-one
correspondence with the set of lines of the vector space
$\mathbb{R}^3$. Formally, $\P^2$ is the quotient $\P^2 = \R^3 \!\! -
\!\! \{0\} \;\;/\sim$ where the equivalence relation $\sim$ is defined
as follows: for two points $p$ and $p'$ of $\P^2$, $p \sim p'$ if
$p=\lambda p'$ for some $\lambda\in\R\!\! - \!\! \{0\}$.

Triangulations of the real projective plane $\mathbb{P}^2$ have been
studied quite well in the past, though mainly from a graph-theoretic
perspective. A {\it contraction} of and edge $e$ in a map
$\mathcal{M}$ removes $e$ and identifies its two endpoints,
if the graph obtained by this operation is simple. $\mathcal{M}$ is
{\it irreducible} if none of its edges can be contracted.  Barnette
\cite{b-gtpp-82} proved that the real projective plane admits exactly
two irreducible triangulations, which are the complete graph $K_6$
with six vertices and $K_4 + \overline{K}_3$ (i.e., the
quadrangulation by $K_4$ with each face subdivided by a single
vertex), which are shown in Figure~\ref{fig-irred}. Note that these
figures are just graphs, i.e. the horizontal and vertical lines do
not imply collinearity of the points.

\begin{figure}[htbp]
\centerline{\scalebox{.6}{\includegraphics{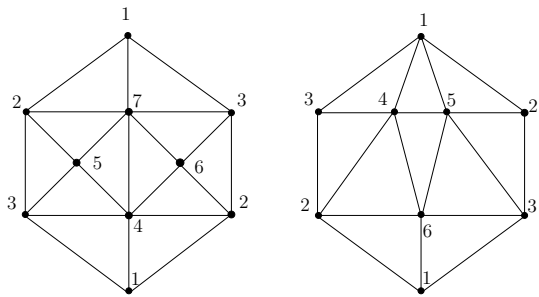}}}
\caption{The two irreducible triangulations of $\mathbb{P}^2$.\label{fig-irred}}
\end{figure}

A {\it diagonal flip} is an operation which replaces an edge $e$ in
the quadrilateral $D$ formed by two faces sharing $e$ with another
diagonal of $D$ (see Figure~\ref{fig-flip}). If the resulting graph is
not simple, then we do not apply it. Wagner \cite{w-bzv-36} proved that
any two triangulations on the plane with the same number of vertices
can be transformed into each other by a sequence of diagonal flips, up
to isotopy. This result has been extended to the torus \cite{d-wttg-73},
the real projective plane and the Klein bottle \cite{nw-dtts-90}. 
Moreover, Negami has proved that for any closed surface $F^2$, there
exists a positive integer $N(F^2)$ such that any two 
triangulations $G$ and $G'$ on $F^2$ with $|V(G)| = |V(G')| \geq
N(F^2)$ can be transformed into each other by a sequence of diagonal
flips, up to homeomorphism \cite{n-dfts-94}. Mori and Nakamoto
\cite{mn-dfhtp-05} gave a linear upper bound of $(8n-26)$ on the number
of diagonal flips needed to transform one triangulation of
$\mathbb{P}^2$ into another, up to isotopy. There are many papers
concerning with diagonal flips in triangulations, see
\cite{n-dftcs-99,e-hefg-07} for more references.

\begin{figure}[htbp]
\centerline{\scalebox{.6}{\includegraphics{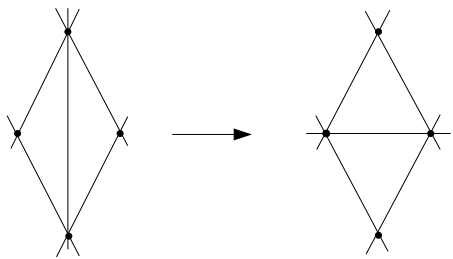}}}
\caption{A diagonal flip.\label{fig-flip}}
\end{figure} 

In this paper, we address a different problem, which consists in
computing a triangulation of the real projective plane, given a finite
point set $\mathcal{P} = \{p_1, p_2,\ldots, p_n\}$ as input. 

\begin{defn} Let us recall background definitions here. More extensive
definitions are given for instance in \cite{z-tc-05,h-cit-79}.
\begin{itemize}
\item {} An (abstract) \textit{simplicial complex} is a set $K$
together with a collection $S$ of subsets of $K$ called (abstract)
\textit{simplices} such that:
	\begin{enumerate}
	\item {} For all $v \in K$, $\{v\}\in S$. The sets $\{v\}$
are called the \textit{vertices} of $K$.
	\item {} If $\tau\subseteq\sigma \in S$, then $\tau\in S$. 
	\end{enumerate} 
Note that the property that $\sigma,\sigma' \in K \Rightarrow
\sigma\cap\sigma'\leq \sigma,\sigma'$ can be deduced from this. 
\item {} $\sigma$ is a $k$-simplex if the number of its vertices is $k+1$. If
$\tau\subset\sigma$, $\tau$ is called a face of $\sigma$. 
\item {} A \textit{triangulation} of a topological space $\X$ is a
simplicial complex $K$ such that the union of its simplices is
homeomorphic to $\X$.
\end{itemize} 
\label{def-tr}
\end{defn}

All algorithms known to compute a triangulation of a set of points in
the Euclidean plane use the orientation of the space as a fundamental
prerequisite. The projective plane is not orientable, thus none of
these known algorithm can extend to $\mathbb{P}^2$.

We will always represent $\mathbb{P}^2$ by the {\it sphere model}
where a point $p$ is same as its diametrically opposite ``copy" (as
shown in Figure~\ref{fig-sphere}(a)). We will refer to this sphere as
the {\it projective sphere}.  A {\it triangulation} of the real
projective plane $\mathbb{P}^2$ is a simplicial complex such that each
face is bounded by a 3-cycle, and each edge can be seen as a greater arc
on the projective sphere. We will also sometimes refer to a
triangulation of the projective plane as a {\it projective
triangulation}. 

\begin{figure}[htbp]
\centerline{
\scalebox{.6}{\includegraphics{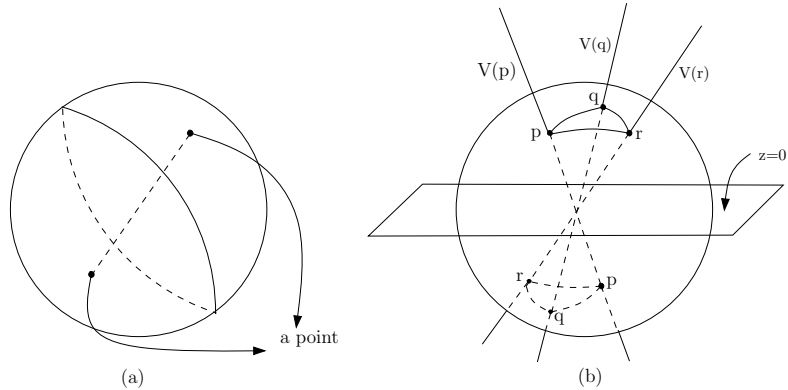}}
}
\caption{(a) The sphere model of $\mathbb{P}^2$. (b) $\bigtriangleup
pqr$ separated from its ``copy" by a {\it distinguishing plane} in
$\mathbb{R}^3$.\label{fig-sphere}} 
\end{figure}

Stolfi \cite{s-opgfg-91} had described a computational model for
geometric computations: the {\it oriented projective plane}, where a
point $p$ and its diametrically opposite ``copy" on the projective
sphere are treated as two different points. In this model, two
diametrically opposite triangles are considered as different, so, the
computed triangulations of the oriented projective plane are actually
not triangulations of $\P^2$. Identifying in practice a triangle and
its opposite in some data-structure is not straightforward.  Let us
also mention that the oriented projective model can be pretty costly
because it involves the duplication of every point, which can be a
serious bottleneck on available memory in practice.

The reader should also note that obvious approaches like triangulating
the convex hull of the points in $\mathcal{P}$ and their diametrically
opposite ``copies" (on the projective sphere) separately will not
work: it may happen that the resulting structure is not a
simplicial complex (see Figure~\ref{fig-ch} for the most obvious
example), so, it is just not a triangulation (see
definition~\ref{def-tr}).

\begin{figure}[htbp]
\centerline{\scalebox{.6}{\includegraphics{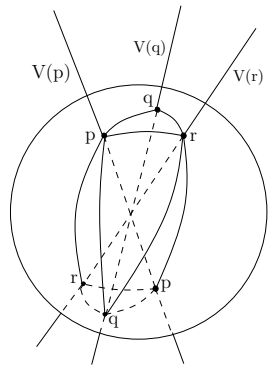}}}
\caption{The convex hull of duplicated points is not a triangulation.
\label{fig-ch}}
\end{figure}

\subsection{Terminology and Notation}

We assume that the positions of points and lines are stored as
homogeneous coordinates in the real projective plane. Positions of
points will be represented by triples $(x,y,z)$ (with $z\neq 0$) and
their coordinate vectors will be denoted by small letters like $p, q,
r,\ldots$. Positions of lines will also be represented by triples
$[x,y,z]$ but their coordinate vectors will be represented by capital
letters like $L, M, N,\ldots$. We shall also state beforehand whether
a given coordinate vector is that of a line or a point to avoid
ambiguity.  Point $p$ and line $M$ are incident if and only if the dot
product of their coordinate vectors $p\cdot M=0$. If $p$ and $q$ are
two points then the line $L=pq$ can be computed as the cross product
$p\times q$ of their coordinate vectors. Similarly the intersections
of two lines $M$ and $N$ can be computed as the cross product $M\times
N$ of their coordinate vectors.

We denote the line in $\mathbb{R}^3$ corresponding to a point $p$ in
$\mathbb{P}^2$ by $V(p)$. A plane in $\mathbb{R}^3$ which separates
$\bigtriangleup abc$ from its diametrically opposite ``duplicate copy"
on the projective sphere will be referred to as a {\it distinguishing
plane} for the given triangle (see Figure~\ref{fig-sphere}(b)).  Note
that a distinguishing plane is not unique for a given triangle. Also
note that such a plane is defined only for non-degenerate triangles on
the real projective plane.


\subsection{Contents of the Paper}

We first prove a necessary condition for the existence of a
triangulation of the set $\mathcal{P} = \{p_1, p_2,\ldots, p_n\}$ of
$\mathbb{P}^2$. More precisely, we show that such a triangulation
always exists if at least six points in $\mathcal{P}$ are in {\it
general position}, i.e., no three of them are collinear.  So if the
number of points $n$ in $\mathcal{P}$ is very large, the probability
of such a set of six points to exist is high, implying that it is
almost always possible to triangulate $\mathbb{P}^2$ from a point set.

We design an algorithm for computing a projective triangulation of
$\mathcal{P}$ if the above condition holds. The efficiency of the
algorithm is not our main concern in this paper. The existence of an
algorithm for computing a triangulation directly in $\P^2$ is our main
goal. As far as we know, this is the first computational result on the
real projective plane.

The paper is organized as follows.  In section~\ref{sec-interior}, we
devise an ``in-triangle" test for checking whether a point $p$ lies
inside a given $\bigtriangleup abc$.  In section~\ref{sec-comput}, we
first prove that a triangulation of $\mathbb{P}^2$ always exists if at
least six points in the given point set $\mathcal{P}$ are in general
position. We then describe our algorithm for triangulating
$\mathbb{P}^2$ from points in $\mathcal{P}$.  Finally, in
section~\ref{sec-conclusion}, we present some open problems and future
directions of research in this area.

\section{The Notion of ``Interior" in the Real Projective Plane\label{sec-interior}} 

It is well-known that the real projective plane is a non-orientable
surface. However, the notion of ``interior" of a closed curve exists
because the projective plane with a {\it cell} (any figure
topologically equivalent to a disk) cut out is topologically
equivalent to a M\"{o}bius band \cite{h-cit-79}.  For a given
triangle on the projective plane, we observe that its interior can be
defined unambiguously if we associate a {\it distinguishing plane}
with it.  The procedure for associating such a plane with any given
triangle will be described in Section~\ref{sec-comput}. For now we
will assume that we have been given $\bigtriangleup pqr$ along with
its distinguishing plane in $\mathbb{R}^3$. We further assume for
simplicity that this plane is $z=0$ for the given $\bigtriangleup pqr$
(as shown in Figure~\ref{fig-sphere}(b)). Consider the three lines
$V(p), V(q)$ and $V(r)$ in $\mathbb{R}^3$. These lines give rise to
four {\it double cones}, three of which are cut by the distinguishing
plane. We define the {\it interior} of $\bigtriangleup pqr$ as the
double cone in $\mathbb{R}^3$ which is not cut by its distinguishing
plane. Based on the above definition, we define a many-one mapping
$s:\mathbb{P}^2 \rightarrow \mathbb{R}^3$ from points in
$\mathbb{P}^2$ to points in $\mathbb{R}^3$ as follows:

\[
s(p) = s(x,y,z) = \left\{ \begin{array}{ll}
           (1,\frac{x}{z},\frac{y}{z}) & z\neq0; \\
           (0,1,\frac{y}{x})           & z=0, x\neq0;\\
           (0,0,1)                     & z=0, x=0. \end{array} \right. 
\]
Given three points $a=(x_0,y_0,z_0), b=(x_1,y_1,z_1), c=(x_2,y_2,z_2)$
and a point $p=(x,y,z)$, $p$ lies inside $\bigtriangleup abc$ if

\begin{eqnarray} 
sign\left| \begin{array}{ccc}
s_0 \\
s_{1} \\
s \end{array} \right| + 
sign\left| \begin{array}{ccc}
s_1 \\
s_2 \\
s \end{array} \right| + 
sign\left| \begin{array}{ccc}
s_2 \\
s_0 \\
s \end{array} \right| \mbox{$=\pm 3$}
\end{eqnarray}
and it lies on the perimeter of $\bigtriangleup abc$ if 

\begin{eqnarray}
sign\left| \begin{array}{ccc}
s_0 \\
s_{1} \\
s \end{array} \right| +
sign\left| \begin{array}{ccc}
s_1 \\
s_2 \\
s \end{array} \right| +
sign\left| \begin{array}{ccc}
s_2 \\
s_0 \\
s \end{array} \right| \mbox{$=\pm 2$}
\end{eqnarray}
Here $s_i=s(x_i,y_i,z_i)$ for $i=0,1,2$, and $s=s(x,y,z)$. The
function $sign(m)$ returns 1 if $m$ is positive, 0 if $m$ is zero, and
$-1$ if $m$ is negative.  The reader should note that similar to the
oriented projective model \cite{s-opgfg-91}, there is no notion of
interior when all $a,b,c$ and $p$ are at infinity. We now consider the
case when the distinguishing plane in $\mathbb{R}^3$ is $\alpha
x+\beta y+\gamma z=0$, where $\alpha,\beta,\gamma$ are arbitrary
constants. We use a linear transformation matrix $\mathcal{M}$ for
transforming the given plane into the plane $z=0$ according to the
equation $\mathcal{M}\cdot p' = p$, where orientation is
preserved. This transformation takes the coordinate vector $p$ of a
point to the vector $p'$. Now the $s$-mapping of equation (1) can be
used for the ``in-triangle" test with the new coordinate vectors, as
described above. For the case when $\gamma\neq0$, we have

\begin{eqnarray}
\mathcal{M} = \left[ \begin{array}{ccc}
0 & \frac{-(\beta^2+\gamma^2)}{\sqrt{(\beta^2+\gamma^2)(\alpha^2+\beta^2+\gamma^2)}} & \frac{\alpha}{\sqrt{(\alpha^2+\beta^2+\gamma^2)}} \\
\frac{\gamma}{\sqrt{\beta^2+\gamma^2}} & \frac{\alpha\beta}{\sqrt{(\beta^2+\gamma^2)(\alpha^2+\beta^2+\gamma^2)}} & \frac{\beta}{\sqrt{(\alpha^2+\beta^2+\gamma^2)}} \\
\frac{-\beta}{\sqrt{\beta^2+\gamma^2}} & \frac{\alpha\gamma}{\sqrt{(\beta^2+\gamma^2)(\alpha^2+\beta^2+\gamma^2)}} & \frac{\gamma}{\sqrt{(\alpha^2+\beta^2+\gamma^2)}} 
\end{array} \right]
\end{eqnarray}
For the case when $\gamma=0, \beta\neq0$, we have

\begin{eqnarray}
\mathcal{M} = \left[ \begin{array}{ccc}
\frac{\beta}{\sqrt{\alpha^2+\beta^2}} & 0 & \frac{\alpha}{\sqrt{\alpha^2+\beta^2}} \\
\frac{-\alpha}{\sqrt{\alpha^2+\beta^2}} & 0 & \frac{\beta}{\sqrt{\alpha^2+\beta^2}} \\
0 & -1 & 0 \end{array} \right]
\end{eqnarray}
Finally, we have the case when $\gamma=\beta=0$. In this case, we simply 
make the $X$-axis the new $Y'$-axis, the 
$Y$-axis the new $Z'$-axis, and the $Z$ axis the new $X'$-axis. 
So our tranformation matrix $\mathcal{M}$ is as follows: 

\begin{eqnarray}
\mathcal{M} = \left[ \begin{array}{ccc}
0 & 0 & 1 \\
1 & 0 & 0 \\
0 & 1 & 0 \end{array} \right]
\end{eqnarray}

Note that all the transformation matrices given by equations (2), (3) and (4) 
are {\it orthogonal} matrices, i.e., $\mathcal{M}^{-1}=\mathcal{M}^T$.

\section{Computing the Projective Triangulation of a Point Set\label{sec-comput}}

We now proceed to discuss our algorithm for triangulating the real projective 
plane given a point set $\mathcal{P} = \{p_1,p_2,...,p_n\}$ as input. 
We number the points in this section for diagramatic clarity.  
We first prove the following simple result for point sets in $\mathbb{P}^2$:

\begin{figure}[htbp]
\centerline{\scalebox{.55}{\includegraphics{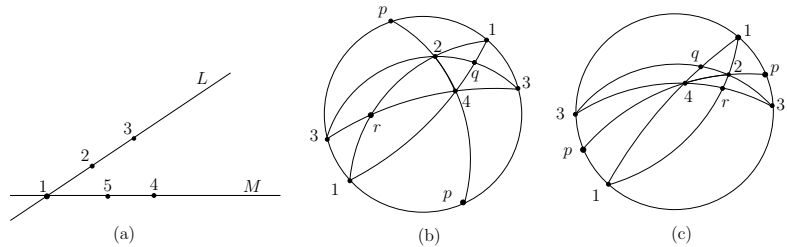}}}
\caption{(a) Every point set $\mathcal{P}$ has a {\it
$K_4$-quadrangulation}, unless $(n-1)$ points are collinear. (b) and
(c) A $K_4$-quadrangulation is sufficient for triangulating the real
projective plane.\label{fig-quad}} 
\end{figure}

\begin{lemma}
If among every set of four points in the point set $\mathcal{P}$  
at least three points are collinear, then at least $(n-1)$ points 
in $\mathcal{P}$ are collinear.  
\end{lemma}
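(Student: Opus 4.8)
The plan is to argue by contraposition, or more precisely by a direct combinatorial case analysis on the collinearity structure of $\mathcal{P}$. Assume that every $4$-element subset of $\mathcal{P}$ contains three collinear points; I want to conclude that $n-1$ of the points lie on a common line. First I would dispose of the trivial cases $n\le 3$ (where the statement is vacuous or immediate), so assume $n\ge 4$. Since the hypothesis applies to \emph{some} $4$-subset, there is at least one line $\ell$ containing three points of $\mathcal{P}$; among all lines meeting $\mathcal{P}$ in at least three points, pick one, call it $\ell$, containing a maximum number $k\ge 3$ of points of $\mathcal{P}$. If $k\ge n-1$ we are done, so suppose for contradiction that $k\le n-2$, i.e.\ there exist at least two points $a,b\in\mathcal{P}\setminus\ell$.

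The key step is then to derive a contradiction from the existence of two points off $\ell$. Let $p,q,r$ be three of the (at least three) points of $\mathcal{P}$ on $\ell$. Consider the $4$-subsets $\{a,b,p,q\}$, $\{a,b,p,r\}$, $\{a,b,q,r\}$. In each, three points must be collinear; since $p,q,r\in\ell$ while $a,b\notin\ell$, the only collinear triple available from $\{p,q,r,a,b\}$ that does \emph{not} already lie on $\ell$ must involve the pair $\{a,b\}$ together with one of $p,q,r$ --- wait, it could also be a triple among $\{p,q,r\}$ itself, but that triple lies on $\ell$ and is fine; so each such $4$-subset is ``explained'' automatically by $p,q,r\in\ell$. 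That means I cannot get a contradiction from $\{p,q,r,a,b\}$ alone, and the real work is to bring in a \emph{fourth} point of $\ell$ or exploit $k\ge 3$ more cleverly. The cleaner route: take the line $m=ab$. For each point $x\in\ell\cap\mathcal{P}$, look at the $4$-set $\{a,b,x,y\}$ for a second point $y\in\ell\cap\mathcal{P}$, $y\ne x$; the collinear triple is either $\{x,y,\cdot\}\subseteq\ell$ (fine, no information) or forces $x$ (or $y$) onto $m$. So this still gives nothing. The genuine argument must instead count: if no line through $a$ meets $\mathcal{P}$ in $\ge 3$ points other than possibly via $b$, then the $\ge 3$ points on $\ell$ together with $a$ and $b$ and a point on a second long line will violate the hypothesis. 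Concretely, I expect the decisive sub-case to be: there are two distinct lines $\ell_1,\ell_2$ each carrying $\ge 3$ points of $\mathcal{P}$; then $|\ell_1\cap\ell_2|\le 1$, so pick $p_1,p_1'\in\ell_1\setminus\ell_2$ and $p_2,p_2'\in\ell_2\setminus\ell_1$ (possible since each $\ell_i$ has $\ge 3$ points and loses at most one to the intersection, leaving $\ge 2$), and the $4$-set $\{p_1,p_1',p_2,p_2'\}$ has no three collinear --- contradiction. Hence there is a \emph{unique} line $\ell$ with $\ge 3$ points, and every other point must pair up collinearly; from uniqueness, any point $x\notin\ell$ lies on no long line, so for any two points $x,x'\notin\ell$ and any two $u,v\in\ell$, the set $\{x,x',u,v\}$ has a collinear triple which cannot lie on $\ell$ (it involves $x$ or $x'$) and cannot be a $\ge 3$-point line other than $\ell$ --- contradiction unless at most one point lies off $\ell$.

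The main obstacle, as the discussion above shows, is organizing the case analysis so that the ``automatically satisfied'' configurations (three points already on the long line) do not swallow the argument; the trick is to force \emph{two} points off the line and then produce a $4$-subset all of whose collinear triples would have to create a \emph{second} long line or a \emph{third} point on $m=xx'$, both of which are excluded by the maximality/uniqueness of $\ell$. Once the two-long-lines case and the one-long-line-plus-two-outside case are both ruled out, the conclusion $k\ge n-1$ follows. I would present it as: (i) handle $n\le 3$; (ii) show at most one line carries $\ge 3$ points, using the $\{p_1,p_1',p_2,p_2'\}$ configuration; (iii) with $\ell$ that unique line carrying $k\ge 3$ points, show $\mathcal{P}\setminus\ell$ has at most one point, using a $\{x,x',u,v\}$ configuration; (iv) conclude $n-1$ points on $\ell$. \qed
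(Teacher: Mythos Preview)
Your final argument (the outline (i)--(iv)) is correct, though the route differs from the paper's. The paper gives a short ad~hoc contradiction: fix a $4$-set $\{1,2,3,4\}$ with $1,2,3$ on a line $L$ and $4\notin L$; using the negated conclusion, pick a second point $5\notin L$; apply the hypothesis to $\{1,2,4,5\}$ to force (up to the symmetry $1\leftrightarrow 2$) $5$ onto the line $M=14$; then $\{2,3,4,5\}$ has no collinear triple. Your approach is more structural: you first prove that at most one line carries three or more points of $\mathcal{P}$ (via the $\{p_1,p_1',p_2,p_2'\}$ configuration), and then that at most one point lies off that unique long line (via the $\{x,x',u,v\}$ configuration, any collinear triple of which would manufacture a second long line, contradicting step~(ii)). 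The paper's version is quicker once the right five points are named, but leans on a figure and an implicit WLOG between the cases $5\in\overline{14}$ and $5\in\overline{24}$. Your version is longer but makes the mechanism transparent and avoids any appeal to a picture. The exploratory middle of your write-up --- where you notice that $4$-sets containing three points already on $\ell$ yield no information --- should be trimmed in a final version, since your clean (i)--(iv) summary already sidesteps that dead end.
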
  
\begin{proof}
It is easy to see that the lemma holds if all points in $\mathcal{P}$
are collinear. So we may safely assume that this is not the case. We
prove the above lemma by the method of contradiction. Assuming that no
set of $(n-1)$ points are collinear. Consider a set of four points
$\{1, 2, 3, 4\}$ as shown in Figure~\ref{fig-quad}(a). Since among
every set of four points, at least three are collinear, so we assume
that $1, 2$ and $3$ are collinear. Now consider a fifth point $5$
instead of $3$, and assume that it does not lie on the line
$L=12$. From the given condition, it must lie on the line $M=14$. But
now no three points among the set $\{2, 3, 4, 5\}$ are collinear, a
contradiction!
\end{proof}

\begin{corollary}
\label{cor1}
If $(n-1)$ points are not collinear in the given point set
$\mathcal{P}$, then there exists a set of four points in
$\mathcal{P}$, no three of which are collinear.
\end{corollary}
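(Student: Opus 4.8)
The plan is to observe that Corollary~\ref{cor1} is nothing more than the contrapositive of the Lemma just proved, so essentially no new argument is needed. First I would write out the logical negation of the Lemma's hypothesis: the statement ``among every set of four points at least three are collinear'' fails precisely when there exists some set of four points of which no three are collinear. Likewise, the negation of the Lemma's conclusion ``at least $(n-1)$ points are collinear'' is ``$(n-1)$ points are not collinear'', reading the conclusion in the natural way as asserting the existence of a single line carrying at least $n-1$ of the points.

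Then the contrapositive of the Lemma reads exactly: if $(n-1)$ points are not collinear, then there exists a set of four points in $\mathcal{P}$, no three of which are collinear. This is verbatim the statement of the corollary, so the proof collapses to the single sentence ``This is the contrapositive of the Lemma.''

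The only thing worth double-checking --- and I do not expect it to be a genuine obstacle --- is that the quantifiers line up cleanly: that ``at least $(n-1)$ points are collinear'' and ``$(n-1)$ points are collinear'' are being used to mean the same thing (some common line carries $\geq n-1$ of the points), and that ``for every set of four'' and ``there exists a set of four'' are honest negations of one another over the family of $4$-subsets of $\mathcal{P}$. Once that bookkeeping is confirmed the corollary follows with no further work, and it supplies precisely the hypothesis needed in the next step to guarantee that $\mathcal{P}$ admits a $K_4$-quadrangulation (Figure~\ref{fig-quad}).
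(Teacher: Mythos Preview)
Your proposal is correct and matches the paper's approach exactly: the paper states the corollary immediately after the lemma with no separate proof, treating it as the evident contrapositive, which is precisely what you have identified and spelled out.
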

We call such a set of four points a {\it $K_4$-quadrangulation}. 
Corollary \ref{cor1} states that every point
set $\mathcal{P}$ in which no $(n-1)$ points are collinear contains a
$K_4$-quadrangulation. We now make the following important observation
that such a point set can be used to construct a triangulation of the
projective plane (see Figure~\ref{fig-quad}(b,c)). We have the
following lemma:

\begin{lemma}
A $K_4$-quadrangulation can be used to construct a projective triangulation.
\end{lemma}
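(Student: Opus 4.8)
The plan is to recover Barnette's irreducible triangulation $K_4+\overline{K}_3$ of $\mathbb{P}^2$ geometrically, using the four points of the $K_4$-quadrangulation as the vertices of the $K_4$ and three inserted points as the $\overline{K}_3$. \emph{Step 1: the four points, joined by suitable arcs, quadrangulate $\mathbb{P}^2$ into three quadrilateral faces} (Figure~\ref{fig-quad}(a)). Since no three of $p_1,p_2,p_3,p_4$ are collinear, the six lines $p_ip_j$ are pairwise distinct, the three diagonal points $q_{12}=p_1p_2\cap p_3p_4$, $q_{13}=p_1p_3\cap p_2p_4$, $q_{14}=p_1p_4\cap p_2p_3$ are distinct from the $p_i$ and from one another, and each line $p_ip_j$ carries exactly one of them (a short linear-algebra check, using that two distinct projective lines meet in a single point). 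I would realize each edge $p_ip_j$ by the arc of its line that avoids the diagonal point on it; then two of the six arcs can meet only at a common endpoint, because the unique common point of their two lines has been excluded, so the arcs form an embedding of $K_4$ in $\mathbb{P}^2$. Euler's formula ($\chi(\mathbb{P}^2)=1$) gives $4-6+F=1$, so $F=3$; each face boundary is a cycle of $K_4$ and hence has length $3$ or $4$; and since the boundary lengths of three disc faces sum to $2\cdot6=12$, all three faces are quadrilaterals. For a completely explicit version, one uses the transitivity of $\mathrm{PGL}_3(\mathbb{R})$ on ordered quadruples in general position to place the points at the standard frame $[1{:}0{:}0],[0{:}1{:}0],[0{:}0{:}1],[1{:}1{:}1]$ and reads off the three quadrilaterals directly.

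\emph{Step 2: subdivide.} Every face of $K_4$ uses all four vertices (in one of the three cyclic orders), so into the interior of each face I insert a new vertex $c_k$ and join it to $p_1,p_2,p_3,p_4$, cutting the face into four triangles; choosing $c_k$ generically keeps every triangle non-degenerate, so each admits a distinguishing plane in the sense of Section~\ref{sec-interior}. (One may even take $c_k$ to be the diagonal point lying in that face: with the arc choice above, the quadrilateral containing $q_{ij}$ does not use the two lines through $q_{ij}$, so the triangulation becomes simply the arrangement of the six lines $p_ip_j$.) The result has $7$ vertices, $6+3\cdot4=18$ edges, and $3\cdot4=12$ triangles (Euler: $7-18+12=1$); combinatorially it is $K_4+\overline{K}_3$. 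The simplicial-complex axioms hold trivially — the seven vertices are distinct, each edge of a triangle is already an edge, and no two triangles have the same vertex set — and subdividing a cell decomposition of $\mathbb{P}^2$ does not change the homeomorphism type, so the underlying space is $\mathbb{P}^2$. This is the desired projective triangulation; by Corollary~\ref{cor1} the four points can be chosen inside $\mathcal{P}$, so it is built from $\mathcal{P}$.

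The main obstacle is the claim in Step~1 that the three faces of the embedded $K_4$ are quadrilateral \emph{discs} rather than, say, a triangle together with a non-simply-connected region — i.e.\ that the embedding is cellular with three $4$-gon faces. This is exactly what the edge-count argument (or, more reliably, the reduction to the standard frame) has to establish; one should also check that the arc choices are mutually consistent, which they are, since each of the six edges of $K_4$ lies on precisely one of the three diagonal-point-bearing lines. Everything past Step~1 — the subdivision and the verification that the output is a genuine simplicial complex homeomorphic to $\mathbb{P}^2$ — is routine bookkeeping.
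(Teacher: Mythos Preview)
Your argument is correct and arrives at the same construction as the paper, just unpacked into two stages. The paper's proof is a single sentence: draw the six great circles $p_ip_j$ on the projective sphere, observe that they meet in three further ``pseudo-points'' (your diagonal points $q_{12},q_{13},q_{14}$), and assert that the resulting arrangement is a simplicial complex isomorphic to one of Barnette's irreducible triangulations. You instead first carve out the $K_4$-quadrangulation by choosing, on each line $p_ip_j$, the arc that misses its diagonal point, verify via Euler's formula that this yields three quadrilateral $2$-cells, and then subdivide each face with a centre vertex --- noting explicitly that taking the diagonal points as centres recovers the six-line arrangement. So the underlying geometry is identical; what you add is the justification (disjointness of the arcs, cellularity, face counts, simplicial-complex axioms) that the paper leaves to a picture. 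Your identification of the output as $K_4+\overline{K}_3$ matches the paper's own description of that triangulation as ``the quadrangulation by $K_4$ with each face subdivided by a single vertex''; your extra freedom in choosing the $c_k$ generically is harmless but not used downstream, since the paper's algorithm relies on the pseudo-points being the actual diagonal points.
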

\begin{proof}
Consider the points of the $K_4$-quadrangulation on the projective
sphere and construct the lines (great circles) $\{12, 13, 14, 24, 23,
34\}$ (see Figure~\ref{fig-quad}(b,c)). The intersection of these
six lines define three more points $\{p, q, r\}$. We call these points
{\it pseudo-points} because these may or may not be points in
$\mathcal{P}$. It is now easy to see that the resulting triangulation
is a simplicial complex and is isomorphic to the projective
triangulation shown in Figure~\ref{fig-irred}(a).
\end{proof}
 
The reader should observe that every triangle in the above
triangulation has precisely two copies on the projective sphere which
are diametrically opposite (see Figure~\ref{fig-quad}(b,c)).  So it
now becomes possible to associate a distinguishing plane with each
triangle in the above triangulation unambiguously.  For every
$\bigtriangleup abc$ in the projective triangulation, we can take the
plane through the center $O$ of the projective sphere and parallel to
the plane passing through the end-points $a,b,c$ of one copy of
$\bigtriangleup abc$. Given a query point $u$, we can now determine
the triangle inside which it lies. We will use this fact quite
extensively in our algorithm. The procedure described above is
incomplete in the sense that we triangulate the real projective plane
with the help of some pseudo-points. We now give a necessary condition
for computing a projective triangulation from a point set
$\mathcal{P}$. The reader should observe that every triangle in
Figure~\ref{fig-quad}(b,c) is incident with exactly one
pseudo-point. We will refer to the set of triangles incident to one
pseudo-point as a {\it region}. Note that any two regions have the
same set of vertices. For constructing a projective triangulation from
$\mathcal{P}$ we will initially take help of pseudo-points, but we
will go on deleting them as their use is over. We now present the
following lemma:

\begin{figure}[htbp]
\begin{center}
\scalebox{.6}{\includegraphics{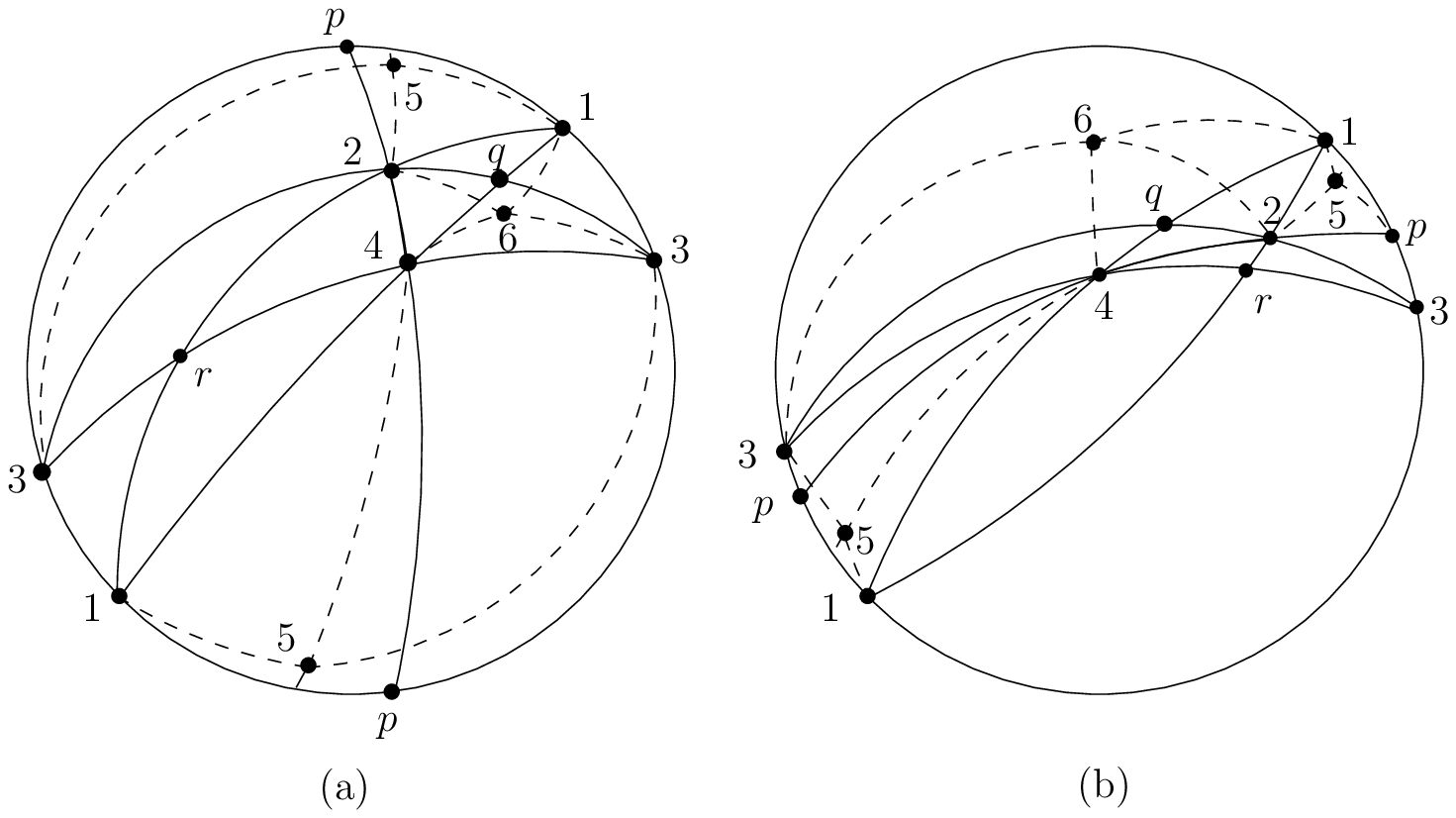}}\\
\scalebox{.6}{\includegraphics{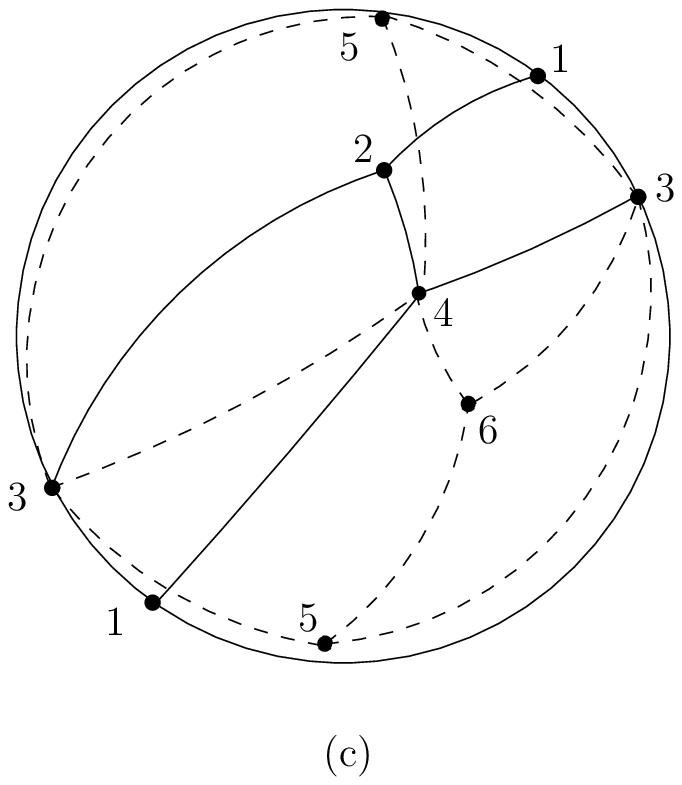}}
\end{center}
\caption{(a) and (b) Symmetric cases for constructing a projective
triangulation. (c) A canonical set always exists when six points in
$\mathcal{P}$ are in {\it general position}.\label{fig-init}}
\end{figure}

\begin{lemma}
If there exists a set of six points $($say, $\{1,2,3,4,5,6\})$ 
in a given point set $\mathcal{P}$ such that four 
of them $($say, $\mathcal{S}=\{1,2,3,4\})$ form a $K_4$-quadrangulation 
and the other two $($say, $\{5,6\})$ are in different regions of the projective triangulation 
formed by $\mathcal{S}$, then it is possible to triangulate the projective plane using 
these six points, unless $(n-2)$ points in $\mathcal{P}$ are collinear.  
\end{lemma}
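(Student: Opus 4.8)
The plan is to argue constructively: starting from the projective triangulation produced by the $K_4$-quadrangulation $\mathcal{S}=\{1,2,3,4\}$, I will perform a fixed sequence of local modifications, driven by the positions of $5$ and $6$, until only the six given points survive as vertices. First I would set up notation. Applying the previous lemma to $\mathcal{S}$ yields a projective triangulation $T_0\cong K_4+\overline{K}_3$ whose three pseudo-points $p,q,r$ are the diagonal points of the complete quadrilateral on $\{1,2,3,4\}$, whose three regions $R_p,R_q,R_r$ each consist of the four triangles incident to one pseudo-point, and for which every one of the six lines $ij$ $(i,j\in\{1,2,3,4\})$ meets the closed region $\overline{R_p}$ (and similarly $\overline{R_q}$, $\overline{R_r}$) only in its $1$-skeleton: four of them carry the four sides of the sub-triangles of a region and the other two carry its two ``diagonals''. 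For every triangle I take as distinguishing plane the plane through the centre $O$ of the projective sphere parallel to one copy of the triangle, so the in-triangle test of Section~\ref{sec-interior} is available.

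By hypothesis $5$ and $6$ lie in distinct regions; relabelling the pseudo-points I may assume $5\in R_p$ and $6\in R_q$, and using the in-triangle test I locate the sub-triangle of $R_p$ containing $5$ and that of $R_q$ containing $6$. Since $R_p$ and $R_q$ have the same vertex set and the symmetry group of the $K_4$-quadrangulation acts transitively on the four sub-triangles of a region (and contains an automorphism exchanging two regions), after relabelling $1,2,3,4$ and possibly swapping $5\leftrightarrow 6$ I reduce to the two symmetric placements drawn in Figure~\ref{fig-init}(a,b). In each of those two configurations I would then carry out an explicit, bounded sequence of moves turning $T_0$ into a triangulation $T$ on the vertex set $\{1,2,3,4,5,6\}$: insert $5$ into its sub-triangle of $R_p$ and $6$ into its sub-triangle of $R_q$; perform a few diagonal flips that make the pseudo-points $p$ and $q$ contractible, delete them, and let $5$ and $6$ inherit their stars; then flip so that the edge $5\,6$ sweeps across $R_r$, which finally makes $r$ contractible, and delete it.

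Because $T$ is a simplicial triangulation of $\mathbb{P}^2$ with $6$ vertices, the relations $2E=3F$ and $V-E+F=\chi(\mathbb{P}^2)=1$ force $E=15$ and $F=10$, so $T$ is necessarily the $K_6$-triangulation of Figure~\ref{fig-irred}(a); the combinatorial content of this step is therefore only to check that the prescribed moves can actually be executed, i.e.\ that no edge is ever duplicated and no triangle repeated along the way (so each intermediate complex, and $T$, is an honest simplicial complex), and that every new edge, in particular $5\,6$, is realisable as a greater arc on the projective sphere.

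The main obstacle, and the place where the collinearity hypothesis enters, is the non-degeneracy check. Because $\{1,2,3,4\}$ is in general position and each line $ij$ meets $\overline{R_p}$ only in its $1$-skeleton, a point lying in the interior of a sub-triangle of $R_p$ lies on no line $ij$; the same holds for $R_q$. Hence the only triple of $\{1,\dots,6\}$ that can be collinear is $\{5,6,k\}$ for at most one $k\in\{1,2,3,4\}$, and we are free to pick, among the (highly symmetric) $K_6$-triangulations of $\mathbb{P}^2$, one in which $\{5,6,k\}$ is not a face. The construction can thus fail only when no admissible choice of the two extra points avoids a line through two of $1,2,3,4$ — equivalently, when every point of $\mathcal{P}$ lying in $R_p$ or in $R_q$ already sits on such a line — and chasing this back, using that $\{1,2,3,4\}$ is a $K_4$-quadrangulation, pins the offending configuration down to ``$(n-2)$ of the points of $\mathcal{P}$ are collinear''. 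Making this final implication precise, and organising the two explicit flip sequences so that all of the above verifications go through uniformly, is where the bulk of the work will lie.
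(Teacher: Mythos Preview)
Your broad strategy matches the paper's---build the seven-vertex triangulation from $\{1,2,3,4\}$ and then eliminate the pseudo-points---but two of your steps diverge in ways that leave gaps. First, the paper does not insert $5$ into a sub-triangle and then flip: since $5$ lies in the quadrilateral region $R_p$, one simply deletes $p$ together with its four incident edges and joins $5$ directly to all four boundary vertices of $R_p$ (and likewise $6$ replaces $q$); no intermediate flips or contractions are needed. Second, and more importantly, your plan to handle $r$ by letting the edge $56$ ``sweep across $R_r$'' misreads the geometry: after the first two replacements the boundary of $R_r$ is still a $4$-cycle on $\{1,2,3,4\}$, so $56$ cannot serve as its diagonal. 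The paper instead observes that each of the edges $12$ and $34$ is now shared by a triangle containing $5$ and one containing $6$; flipping $12$ (say) therefore creates the edge $56$ \emph{in place of} $12$, freeing $12$ to be reinserted as the diagonal of the quadrilateral $R_r$ once $r$ is deleted.

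This reframing is what pins down the collinearity obstruction, which your proposal does not get right. The flip of $12$ is illegal precisely when $6$ lies on the wrong side of the line $52$, and that of $34$ when $6$ lies on the wrong side of $54$; since the edge $24$ lies between those two lines, both flips fail simultaneously only when $5,2,4,6$ are collinear. If this degeneracy persists for every choice of the sixth point from $\mathcal{P}$, then all of $\mathcal{P}$ except $1$ and $3$ lies on the line through $2$ and $4$, which is exactly the $(n-2)$-collinear escape clause. Your alternative---choosing among abstract $K_6$-triangulations one that avoids a face $\{5,6,k\}$---does not work, because the edges are fixed great arcs determined by the point positions, so you are not free to re-embed combinatorially; and your sentence ``every point of $\mathcal{P}$ lying in $R_p$ or in $R_q$ already sits on such a line'' contradicts your own earlier (correct) observation that interior points of a region lie on no line $ij$.
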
 
\begin{proof}
We give a constructive proof of the above statement. We first
construct a projective triangulation with the set $\{1,2,3,4\}$ (as
described above). Suppose points $5$ and $6$ lie in the regions
associated with the pseudo-points $p$ and $q$ respectively (see
Figure~\ref{fig-init}(a,b)). We now add point $5$ and make it
adjacent to the vertices of its bounding region, deleting the
pseudo-point $p$ and the edges it was incident with.  The newly added
edges are shown by dashed lines. The pseudo-points have also been kept
for better understanding.  We now add point $6$ and delete the
corresponding pseudo-point $q$ and the edges it was incident with. Now
we intend to delete the pseudo-point $r$ and construct a valid
projective triangulation using only points in $\mathcal{P}$. Here we
make the important observation that either the edge $12$ or $34$ can
be flipped. To see this, note that if flipping of neither of these
edges was possible, then $6$ must lie to the ``left" (as shown in
Figure~\ref{fig-init}(a)) or ``right" (as shown in
Figure~\ref{fig-init}(b)) of both the lines $52$ and $54$, in which
case flipping of edges would induce crossings. (Note that we refer to
a point being on the ``left" or ``right" of a line only {\it locally}
with respect to front half of the projective sphere.)  However, the
edge $24$ lies in between these two lines and $6$ cannot lie to its
left (resp. right). Thus, our claim holds.

Suppose the edge $12$ can be flipped. We then
construct a valid projective triangulation by flipping $12$, deleting
the pseudo-point $r$ and adding the edge $12$ in that region. Observe
that the projective triangulation constructed is isomorphic to that
shown in Figure~\ref{fig-irred}(b). In the event that flipping of
neither $12$ nor $34$ is possible, all four points $5, 2, 4, 6$ must
be collinear. Since such a flip is also not possible with any other
point in $\mathcal{P}$, they must all lie on the line $524$, implying
that $(n-2)$ of the points in $\mathcal{P}$ are collinear.
\end{proof}

We will refer to such a $K_4$-quadrangulation which has two points of $\mathcal{P}$ 
in different regions as a {\it canonical set}. 
We now have almost all the basic tools required for triangulating the real projective 
plane from a point set $\mathcal{P}$.
All that we need to characterize is the existence 
of a canonical set.  
So far we have not used anywhere the assumption that at least six points in $\mathcal{P}$ are 
in {\it general position}. It turns out that there always exists a 
canonical set in $\mathcal{P}$ in this case. We have the following lemma:

\begin{lemma}
\label{lemma4}
If at least six points in $\mathcal{P}$ are in general position, then there 
exists a canonical set.  
\end{lemma}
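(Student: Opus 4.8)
The goal is to extract from any six points in general position a canonical set: a $K_4$-quadrangulation $\mathcal{S}$ together with two further points of $\mathcal{P}$ lying in different regions of the projective triangulation induced by $\mathcal{S}$. Since six points $\{1,2,3,4,5,6\}$ in general position means every subset of four is a $K_4$-quadrangulation, the only thing to establish is that, \emph{for some} choice of four of them, the remaining two fall into different regions. So the whole argument is really about the combinatorics of the three-region partition: given a $K_4$-quadrangulation on the projective sphere, the six great circles through its pairs cut $\P^2$ into the three regions (each associated with one pseudo-point $p,q,r$), and I need to control where points $5$ and $6$ land.

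\textbf{Step 1: set up the region partition concretely.} Fix the $K_4$-quadrangulation $\mathcal{S}=\{1,2,3,4\}$ and, using the distinguishing-plane construction already established, describe the three regions explicitly as the three ``double cone'' pieces of $\P^2$ cut out by the six lines $\{12,13,14,23,24,34\}$, with $p,q,r$ the three pseudo-points (pairwise intersections that are not in $\mathcal{S}$). The key structural fact to record is that each of $5$ and $6$ lies in exactly one region (generically — none of the six points is on any of the six lines, by general position), and that the three regions are genuinely distinct cells covering $\P^2$.

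\textbf{Step 2: case analysis on the landing of $5$ and $6$.} If for the chosen $\mathcal{S}$ the points $5$ and $6$ already lie in different regions, we are done. Otherwise $5$ and $6$ lie in the \emph{same} region of the triangulation induced by $\{1,2,3,4\}$. Now re-choose the quadrangulation: replace one of $\{1,2,3,4\}$ by $5$, forming e.g. $\mathcal{S}'=\{1,2,3,5\}$, and ask where $4$ and $6$ now lie. The heart of the argument is a counting/pigeonhole claim: one cannot have, \emph{simultaneously for every one of the relevant four-element subsets of the six points}, that the two leftover points always share a region. I would make this precise by showing that if that were the case, the six points would be forced into a degenerate configuration — concretely, that too many of them would have to lie on a common line, contradicting general position (this is the same flavour of argument used in Lemma~1 and in the $(n-2)$-collinear escape clause of the previous lemma). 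The cleanest route is probably: pick the pair among the six points that is ``extreme'' in some fixed affine chart, use it plus two neighbours to build $\mathcal{S}$, and check directly that the remaining two are separated by one of the lines $13$, $24$, etc.

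\textbf{Main obstacle.} The subtle part is not the existence of a $K_4$-quadrangulation — general position gives that for free — but proving that we can \emph{always} rearrange the six points so the leftover pair straddles two regions. The danger is a symmetric configuration where every re-choice keeps the extra pair together; I expect to rule this out by showing such a configuration would force three of the six points to be collinear (via an argument on which side of the lines $52$ and $54$ the sixth point lies, exactly as in the proof of the preceding lemma), contradicting general position. Getting the case analysis exhaustive — covering the symmetric left/right situations of Figure~\ref{fig-init}(a,b) and verifying no case escapes — is the real work; everything else is bookkeeping with the region partition.
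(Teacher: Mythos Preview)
Your overall strategy matches the paper's: argue by contradiction, assuming that for every $K_4$-quadrangulation drawn from the six general-position points the two leftover points share a region, and then re-choose the quadrangulation to break this. The one place your plan diverges is the expected form of the contradiction. You anticipate being pushed toward a collinearity of three of the six points; the paper never needs that. After fixing $\{1,2,3,4\}$ with $5,6$ in a common region, it passes to the quadrangulation $\{3,4,5,6\}$ --- by hypothesis $1$ and $2$ must again share a region there, which pins $2$ to one side of the line $54$ --- and then observes that this placement forces $2$ and $6$ into \emph{different} regions of the quadrangulation $\{1,3,4,5\}$. So the contradiction is simply ``here is an explicit canonical set after all,'' obtained by two specific re-choices, and the sidedness case analysis you flag as the main obstacle collapses to a single check rather than an exhaustive left/right enumeration. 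Your collinearity route may also close, but it is the longer way around.
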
  
\begin{proof}
We prove this lemma by the method of contradiction. We assume that the
lemma does not hold, so for every $K_4$-quadrangulation in
$\mathcal{P}$, all other points of $\mathcal{P}$ are in the same
region.  Consider a $K_4$-quadrangulation $\{1,2,3,4\}$ in
$\mathcal{P}$. Suppose we add two more points $5$ and $6$, and they
lie in the same region (as shown in Figure~\ref{fig-init}(c)). Now
consider the $K_4$-quadrangulation formed by $\{4,6,3,5\}$. If this is
to satisfy the property that all points in $\mathcal{P}$ lie in
exactly one of its regions, then it is easy to see that $2$ must lie
on or to the right of the line $54$. But now $2$ and $6$ lie in
different regions of the $K_4$-quadrangulation $\{4,5,1,3\}$, a
contradiction!
\end{proof}

So we now have a procedure for triangulating the real projective 
plane given a point set $\mathcal{P}$ with at least six points in general position. 
We summarize our results in the following theorem:

\begin{theorem}
Given a point set $\mathcal{P}=\{p_1,p_2,\ldots,p_n\}$ with 
at least six points in general position, it is 
always possible to construct a projective triangulation. 
\end{theorem}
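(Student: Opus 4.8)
The plan is to obtain the theorem as a direct consequence of Lemma~\ref{lemma4} and the two preceding lemmas of Section~\ref{sec-comput}, reducing the general case to the six-point case by incremental point insertion. First I would apply Lemma~\ref{lemma4}: since at least six points of $\mathcal{P}$ are in general position, $\mathcal{P}$ contains a canonical set, that is, a $K_4$-quadrangulation $\mathcal{S}=\{1,2,3,4\}$ together with two further points $\{5,6\}$ lying in different regions of the projective triangulation determined by $\mathcal{S}$. The lemma on canonical sets then yields a projective triangulation $\mathcal{T}_0$ whose vertex set is $\{1,\dots,6\}\subseteq\mathcal{P}$ (the pseudo-points having been deleted) and which is isomorphic to the triangulation of Figure~\ref{fig-irred}(b) --- unless $(n-2)$ points of $\mathcal{P}$ are collinear. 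I would rule out that exception immediately: if $(n-2)$ points lay on a line $\ell$, then at most two points of $\mathcal{P}$ would lie off $\ell$, hence at least four of the six points in general position would lie on $\ell$, forcing three of them to be collinear, contradicting general position. So $\mathcal{T}_0$ always exists.

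Next I would insert the remaining $n-6$ points of $\mathcal{P}$ one at a time, maintaining the invariant that the current complex $\mathcal{T}$ is a projective triangulation whose vertices all belong to $\mathcal{P}$ and that every triangle of $\mathcal{T}$ carries a distinguishing plane. To insert a point $p\in\mathcal{P}$ that is not yet a vertex, I would locate it by running the in-triangle test of Section~\ref{sec-interior} on the triangles of $\mathcal{T}$ together with their distinguishing planes. Three cases occur. If $p$ coincides with an existing vertex, discard it. If $p$ lies strictly inside a triangle $\bigtriangleup abc$, replace $\bigtriangleup abc$ by the three triangles $\bigtriangleup abp$, $\bigtriangleup bcp$, $\bigtriangleup cap$; a distinguishing plane of $\bigtriangleup abc$ still separates the two spherical copies of each new triangle, so it can be reused for all three. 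If $p$ lies in the relative interior of an edge $ab$ shared by $\bigtriangleup abc$ and $\bigtriangleup abd$, replace these two triangles by $\bigtriangleup apc$, $\bigtriangleup pbc$, $\bigtriangleup apd$, $\bigtriangleup pbd$, reusing the parents' distinguishing planes. In every case the underlying space of $\mathcal{T}$ is unchanged, hence still homeomorphic to $\mathbb{P}^2$, and $\mathcal{T}$ remains a simplicial complex, so the invariant is preserved. After all points are processed, $\mathcal{T}$ is a projective triangulation with vertex set $\mathcal{P}$, which proves the theorem.

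The step I expect to be the real obstacle is not the combinatorics of the splits but the verification that the geometric realisation stays under control throughout --- concretely, that the distinguishing planes of the triangles of $\mathcal{T}$ always partition $\mathbb{P}^2$ into the closed triangles of $\mathcal{T}$ (so that the in-triangle test returns the correct cell), and that a split performed on one spherical copy of a triangle carries along its antipodal copy consistently. For the base triangulation $\mathcal{T}_0$ this is exactly the property established in the remark following the ``$K_4$-quadrangulation'' lemma; the work is to show it persists after each insertion. I would argue this from two facts: the interior double cone of each child triangle is contained in the interior double cone of its parent, so a distinguishing plane of the parent remains distinguishing for the child; and $p$ together with its antipode is inserted symmetrically into the two antipodal copies of the affected triangle(s), so the antipodal identification is respected.
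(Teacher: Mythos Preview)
Your proposal is correct and follows essentially the same route as the paper: invoke Lemma~\ref{lemma4} to obtain a canonical set, build the initial six-vertex triangulation via the preceding lemma, and then insert the remaining points by splitting the containing face. If anything, you are more careful than the paper itself --- you explicitly rule out the $(n-2)$-collinear exception, handle the point-on-edge case, and articulate the invariant on distinguishing planes, all of which the paper either leaves implicit or omits.
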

We now present our algorithm which outputs a
triangulation of the real projective plane given a point set
$\mathcal{P} = \{p_1,p_2,\ldots,p_n\}$ with at least six points in
general position.

\begin{enumerate}
\item {} Find a set $\mathcal{S}=\{1,2,3,4,5,6\}$ of six points such that no 
       three points in $\mathcal{S}$ are collinear. 
\item {} Construct a projective triangulation with the set $\mathcal{S}$. Associate 
distinguishing planes with every triangle of the triangulation.
\item {} {\bf for} {all points $p \in \mathcal{P}\backslash \mathcal{S}$} {\bf do}
\item {} \hspace*{.5cm} Identify the triangle $\bigtriangleup abc$ in which $p$ lies. 
\item {} \hspace*{.5cm} Make $p$ adjacent to the vertices $a, b$ and $c$. 
	Make the distinguishing\\
 	\hspace*{.5cm} plane of $\bigtriangleup apb,
	\bigtriangleup bpc$, and $\bigtriangleup cpa$ the same as that
	for $\bigtriangleup abc$.
\item {} {\bf end for}
\item {} {\bf return}(triangulation of $\mathbb{P}^2$). 
\end{enumerate}

There are two possible approaches for finding the set $\mathcal{S}$ in
step 1. In the first approach, we arbitrarily choose a starting point
$q$ and initialize our set $\mathcal{S}=\{q\}$. For any point $p \in
\mathcal{P}\backslash\mathcal{S}$, we add $p$ in $\mathcal{S}$ if $p$
is not collinear with any two points in $\mathcal{S}$. We stop when
$\mathcal{S}$ contains six points. It may happen that we are not able
to find such a set $\mathcal{S}$ of six points if we start with any
random starting point $q$. So we iterate over all points in
$\mathcal{P}$ for choosing the starting point. This approach has a
worst-case time complexity of $\mathcal{O}(n^2)$. A slightly better
approach can be adopted for performing step 1, which works in
$\mathcal{O}(n)$ time if we assume that the minimum line cover of the
point set $\mathcal{P}$ is greater than $4$. In this approach, we
first choose any two points $1$ and $2$. Let the line defined by them
be $L$. We delete all other points in $\mathcal{P}$ on $L$. We now
choose two more points $3$ and $4$. Let the line defined by them be
$M$. We delete all other points in $\mathcal{P}$ on $M$. We also
delete all other points on $N$ (the line defined by $1$ and $3$) and
$T$ (the line defined by $2$ and $4$). Now choose two more points $5$
and $6$. We now have the required set $\mathcal{S} =
\{1,2,3,4,5,6\}$. It is easy to see that this approach takes
$\mathcal{O}(n)$ time if the minimum line cover of $\mathcal{P}$ is
greater than $4$. The above two approaches work reasonably well for
most point sets. However, for certain point sets, it may happen that
both these approaches fail to find such a set $\mathcal{S}$. We are
currently unaware of an optimal method for finding such a set which
works in all cases. We believe that some approach similar to that used
for solving the ``ordinary line" problem can be adopted for finding
the same (see for instance \cite{km-noldn-58,mah-olpcg-97,dm-focoh-99}). 
After having found such a set $\mathcal{S}$, we find a canonical set within 
$\mathcal{S}$ by a procedure similar to that described in Lemma \ref{lemma4}.

Once we have a canonical set, constructing the projective
triangulation in step 2 takes $\mathcal{O}(1)$ time. We store the
triangulation in a DCEL so that addition and deletion of edges and
vertices takes $\mathcal{O}(1)$ time. The loop in steps 3-6 runs once
for every point $p \in \mathcal{P}\backslash\mathcal{S}$.  Inside the
loop, we use our ``in-triangle" test (as described in
Section~\ref{sec-interior}) for testing whether a point lies inside a
given triangle. We use a procedure similar to that described by
Devillers et al. in \cite{dpt-wt-02} for identifying $\bigtriangleup
abc$ inside which $p$ lies. We first choose any arbitrary vertex $t$
of the current projective triangulation.  We then identify the
triangle whose interior is intersected by the line $L=tp$.  This test
is performed by checking for all edges $E$ of all triangles sharing
vertex $t$ whether the intersection of $L$ and the {\it line}
described by $E$ lies inside the given triangle (see
Figure~\ref{fig-walk}(a)). After having identified the starting
triangle, we move to its neighbor sharing the edge $E$. In this way,
we ``walk" in the triangulation along the line $L$.  We stop when $p$
lies inside the current triangle.  Although this method of ``walking"
in a triangulation has a worst-case time complexity of
$\mathcal{O}(n)$, it is reasonably fast for most practical purposes.
So the loop takes a total of $\mathcal{O}(n^2)$ steps. Thus, our
algorithm computes a projective triangulation from a given point set
$\mathcal{P}$ in $\mathcal{O}(n^2)$ steps.

\medskip

As mentioned in the introduction, the complexity of the algorithm is
not our main concern in the present paper. Still, note that our
algorithm is incremental, which is an important property in
practice. $\mathcal{O}(n^2)$ is a standard worst-case complexity for
incremental algorithms computing triangulations in the Euclidean
plane. After step~2, instead of inserting the points
incrementally, we could do the following\footnote{as suggested by an
anonymous reviewer}: for each point, find the triangular face of the
initial triangulation containing it. Then, in each of these faces,
triangulate the set of points using the usual affine method. This can
be done since the convex hulls of subsets of points in a triangular
face of the initial triangulation can be defined with the help of
distinguishing planes. This yields an optimal $O(n\log n)$ worst-case
time (non-incremental) algorithm.

\begin{figure}[htbp]
\centerline{
\scalebox{.6}{\includegraphics{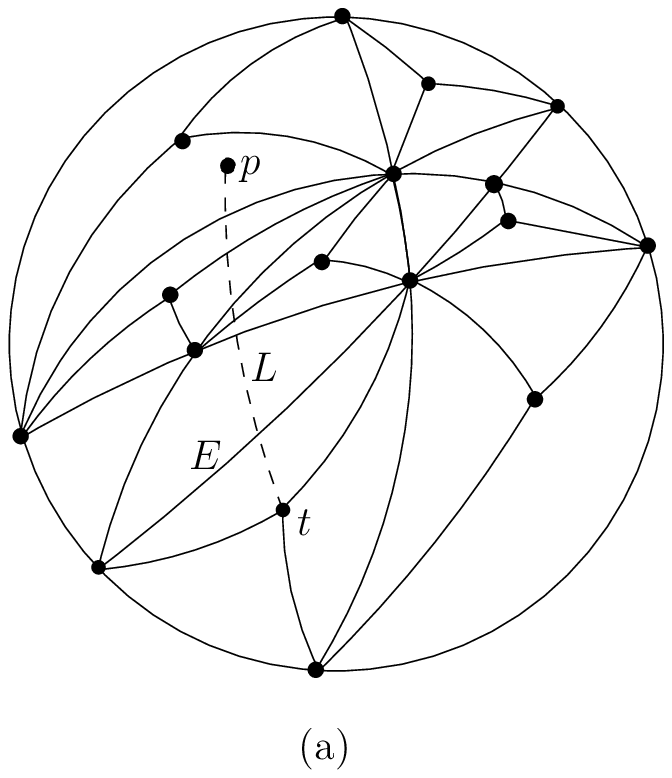}}
\scalebox{.6}{\includegraphics{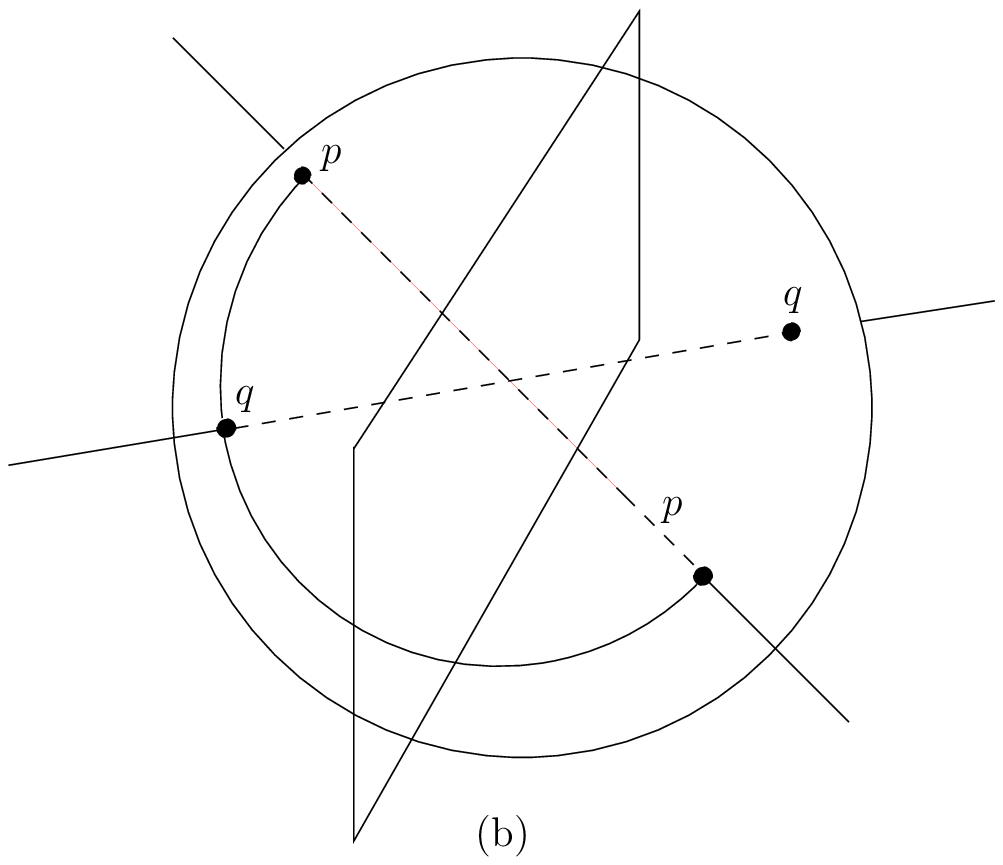}}
}
\caption{(a) ``Walking" in a projective triangulation. (b) Two copies
of the edge $pq$, only one is cut by the distinguishing plane.\label{fig-walk}}
\end{figure}

\section{Conclusion and Open Problems\label{sec-conclusion}}


It woud be interesting to check whether the metric on $\P^2$ allows
to define a triangulation of the projective plane that would extend
the notion of {\it Delaunay} triangulation, which is well-known in
the Euclidean setting. Then, extending the randomized incremental
insertion with a hierarchical data-structure such as \cite{d-dh-02} to
the projective case, if possible, would lead to an incremental
algorithm with better theoretical (and practical, too) complexity.

Also, problems like the {\it Minimum Weight Triangulation}
\cite{mr-mwtn-07}, {\it Minmax Length Triangulation}
\cite{et-qtaml-93}, etc., may have meaning even on the real projective
plane. The {\it Minimum Weight Triangulation} problem was neither
known to be NP-Hard nor solvable in polynomial time for a long time
\cite{gj-cigtn-79}. This open problem was recently solved and was
shown to be NP-Hard by Mulzer and Rote \cite{mr-mwtn-07}. The {\it
Minmax Length Triangulation} problem asks about minimizing the maximum
edge length in a triangulation of a point set $\mathcal{P}$. This
problem was shown to be solvable in time $\mathcal{O}(n^2)$ by
Edelsbrunner and Tan \cite{et-qtaml-93}. It would be interesting to
analyze the complexity of these problems on the real projective plane
$\mathbb{P}^2$.

\subsection*{Acknowledgements}

The authors would like to thank Olivier Devillers for his valuable
suggestions and helpful discussions.

\small

\end{document}